\newcolumntype{C}[1]{>{\centering}p{#1}}
\newcommand{\val}{\textnormal{val}}
\newcommand{\opt}{\textnormal{opt}}
\newcommand{\s}{\textsf}
\newcommand{\G}{\mathcal{G}}
\newcommand{\so}{\mathscr{D}_{1}}
\newcommand{\mbeta}{\bm{\beta}}
\newcommand{\mB} {\bm{B}}
\newcommand{\mS} {\bm{s}}
\newcommand{\bmx} {\bm{x}}
\newcommand{\myinit}{\mathrm{\Pi}_{\textnormal{PoC}}.\s{init}^H}
\newcommand{\myopen}{\mathrm{\Pi}_{\textnormal{PoC}}.\s{open}^H}
\newcommand{\myvrf}{\mathrm{\Pi}_{\textnormal{PoC}}.\s{vrf}^H}
\title{Towards a Multi-Chain Future of Proof-of-Space}
\author{Shuyang Tang\inst{1}\and Jilai Zheng\inst{1}\and Yao Deng\inst{1} \and Ziyu Wang\inst{2} \and Zhiqiang Liu\inst{1}\thanks{Zhiqiang Liu (\email{liu-zq@cs.sjtu.edu.cn}) and Dawu Gu (\email{gu-dw@cs.sjtu.edu.cn}) are the corresponding authors. The research is supported by the National Natural Science Foundation of China (Grant No. 61672347).}$^\text{(\Letter)}$ \and \\ Dawu Gu\inst{1}$^\text{(\Letter)}$}
\institute{$^1$Department of Computer Science and Engineering, Shanghai Jiao Tong University, Shanghai, China\\
\texttt{\{htftsy, zhengjilai, deng19930115\}@sjtu.edu.cn} \\
\texttt{\{liu-zq,gu-dw\} @cs.sjtu.edu.cn} \\ $^2$ School of Cyber Science and Technology, Beihang University, Beijing, China\\
\texttt{wangziyu@buaa.edu.cn}}
\authorrunning{Shuyang Tang et. al.}
\begin{document}
\maketitle

\begin{abstract}
Proof-of-Space provides an intriguing alternative for consensus protocol of permissionless blockchains due to its recyclable nature and the potential to support multiple chains simultaneously. However, a direct shared proof of the same storage, which was adopted in the existing multi-chain schemes based on Proof-of-Space, could give rise to newborn attack on new chain launching. To fix this gap, we propose an innovative framework of single-chain Proof-of-Space and further present a novel multi-chain scheme which can resist newborn attack effectively by elaborately combining shared proof and chain-specific proof of storage. Moreover, we analyze the security of the multi-chain scheme and prove that it is incentive-compatible. This means that participants in such multi-chain system can achieve their greatest utility with our proposed strategy of storage resource partition.
\keywords{Blockchain \and Mechanism Design \and Consensus \and Cryptocurrency \and Proof-of-Space}
\end{abstract}

\section{Introduction}
Since the proposal of Bitcoin in 2008~\cite{bitcoin},
blockchain has been successfully providing a decentralized consensus on a distributive ledger in a permissionless environment through a peer-to-peer network.
In a high level, blockchain is a chain of blocks, each containing certain linearly ordered transactions.
The consensus of blockchain performs a ``leader election'' process and a ``ledger extension'' process for each round.
The ``leader election'' process elects  one  or few  \emph{leaders} from all consensus participants (known as \emph{miners}) according to their computing power, and then these leaders perform a ``ledger extension'' process via appending their proposed blocks to the rear of the blockchain.
Since the specific computing power is evaluated by having miners work on finding certain hash functions preimages, such a way of leader election is called \emph{Proof-of-Work}~(PoW)~\cite{DBLP:conf/crypto/DworkN92}.
The computing power of conducting PoW is referred to as \emph{hash power}.

PoW consensus scheme suffers from notorious consumption of hash power which turns out to be a waste of natural resource. This circumstances has activated the investigation of alternative consensus schemes which are more energy-efficient. An alternative consensus scheme is \emph{Proof-of-Stake}~(PoS)~\cite{pos,DBLP:journals/sigmetrics/BentovLMR14,DBLP:journals/iacr/BentovPS16a,DBLP:conf/asiacrypt/PassS17,DBLP:conf/crypto/KiayiasRDO17,DBLP:conf/eurocrypt/DavidGKR18,DBLP:conf/ccs/BadertscherGKRZ18}. In the ``leader election'' process of PoS consensus scheme, the leader is randomly selected proportionally to the \emph{stake} that each miner holds, rather than the hash power.
Another alternative to PoW is \emph{Proof-of-Space} (aka. \emph{Proof-of-Capacity}, PoC).
In PoC consensus, the one-time consumption of hash power is replaced by the holding of the storage resource, namely, recyclable hardware disks.
Also, PoC is inherited with a nature of concurrency by
providing resource proof for multiple chains simultaneously.
With a shared nonce for the \emph{pebbling graph}~\cite{DBLP:conf/crypto/DworkNW05,DBLP:conf/crypto/DziembowskiKW11,DBLP:conf/tcc/DziembowskiKW11}~(PG, the building block of most existing PoC-based consensus schemes),
the claim on the same storage can be applied to the consensus of more than one PG-based blockchain.
With such a shared proof of storage, the same storage contributes to the security of
multiple chains -- less total resource is required for the same security guarantee globally.

However, a direct shared proof of the same storage brings about \emph{newborn attack},
which makes new chains hard to launch since holders of large storage pools may attack a newly started chain with almost zero-cost~(it needs only to duplicate the proof one more time).
In this paper, we aim to address this issue by proposing an innovative multi-chain scheme of proof-of-space based on
the SpaceMint~\cite{spacemint} protocol. This scheme is built on a combination of a shared proof and a chain-specific proof of storage, which makes the same storage source contributes simultaneously to multiple blockchains, and the cost for an adversary to launch a \emph{newborn attack} is enormous. Moreover, we prove that our scheme is incentive-compatible,  which means
participants can achieve their greatest utility
with our desired strategy of storage resource partition.
\subsection{Related Works}
 Two recent works~\cite{8,2}both proposed their ``Proof-of-Space" protocol. Their main difference is whether the proofs are for transient storage or persistent storage. Proof of transient space(PoTS) by Ateniese et al.~\cite{8} enables efficient verification of memory-hard function~\cite{7}, which is a function that needs a lot of space to compute. In this work, a verifier only needs $O(\s{poly} \log S)$ time and space to verify the claimed space usage of the prover, where $S$ is the amount of storage the prover $P$ wants to dedicate.  Proof of persistent space(PoPS) by Dziembowski et.  al.~\cite{2} allows the verifier to repeatedly audit the prover.  Only prover who stores the data persistently can pass the repeatedly audit.
\par
Proof of retrievability(PoR)~\cite{9} allows a user who outsources some useful data to an untrusted server to repeatedly verify if the data is still existing in the server. The difference between PoPS and PoR is whether a large amount of initial data is transferred from verifier $\textsf{V}$ to prover $\textsf{P}$.
\par
Previous PoC constructions only differ in the pebbling graphs $G$ .
Dziembowski et.  al.~\cite{2} proposed two constructions of PoC schemes in the random oracle model, using Merkle tree and graphs with high ``pebbling complexity". One is based on a graph with high pebbling complexity by Paul et. al.~\cite{3}, which is
$(\Theta(S/\log S), S/\log S, \infty)$-secure~(see definition in Sec.~\ref{sec:gua}).
Another one combines superconcentrators~\cite{DBLP:journals/im/AlonC03,DBLP:conf/sirocco/Schoning97,DBLP:journals/arscom/KolmogorovR18},  random bipartite expander graphs~\cite{DBLP:journals/ppl/Haglin95,DBLP:journals/dm/Thomason89a}
and depth robust graphs~\cite{4,DBLP:conf/eurocrypt/AlwenBP17} and is
\begin{center}
$(\Theta(S), \infty, \Theta(S))$-secure
\end{center}
\par
The construction of Ren and Devadas~\cite{5} uses stacked bipartite expander graphs, which is
\begin{center}
$(\alpha \cdot S, (1-\alpha) \cdot S, \infty)$-secure
\end{center}
for any $\alpha \in [0, 0.5]$. The labels of the graph are computed just as in ~\cite{2}, however the prover only stores the labels on top of this stack.
\par
The construction of Krzysztof Pietrzak~\cite{6} uses depth-robust graphs from ~\cite{7} to realize PoC,  which is

\begin{center}
$(S \cdot (1-\epsilon), \infty, S)$-secure
\end{center}
This construction has a tight bound, which means it can get security against adversary storing $(1-\epsilon)$ fraction of the space.  Moreover,  this construction gets security against parallelism, which implies massive parallelism of oracle queries doesn't benefit the adversary. Besides, this work also introduces and constructs a new type of PoC, which allows the prover to store useful data at the same time.

\subsection{Paper Organization}
The remainder of this paper is organized as follows.
Sec.~\ref{sec:2} introduces notations, building blocks and the background of PoC.
Sec.~\ref{sec:4} presents a single-chain PoC scheme built on SpaceMint -- one of the most well-known PoC scheme today.
Based on this, in Sec.~\ref{sec:5}, we describe our framework for the multi-chain PoC scheme both by a general functionality and a specific protocol realizing one case of the functionality.
Finally, we analyze the incentive compatibility and security of our framework in Sec.~\ref{sec:6}.

\section{Backgrounds} \label{sec:2}

\subsection{Notations}
For a set $S$,  $|S|$ denotes the number of elements in $S$. With ``$||$'' we denote concatenation of strings.
More generally, for any two tuples $\bm{m}_1,\bm{m}_2$, $\bm{m}_1\circ \bm{m}_2$ is the concatenation of them and for unary tuple $\bm{m}_1=(m)$, $\bm{m}_1\circ \bm{m}_2$ is written as $m\circ \bm{m}_2$ (same to the case of $\bm{m}_2=(m)$).
We ideally assume $N$ participants and refer to them by either identities $(P_1,P_2,\ldots,P_N)$ or their public keys ($\s{pk}_1, \s{pk}_2,\ldots, \s{pk}_N$) interchangeably.
The secret key corresponding to a public key $\s{pk}$ is denoted as $\s{pk}^{-1}$ for simplicity.
To further facilitate our description of a high-level framework.
We assume a public-key infrastructure (PKI) among all participants,
which is described by a functionality $\mathcal{F}_{cert}$.
Moreover, to any tuple of messages $(m_0,m_1,\ldots,m_\ell)$, we use $(m_0,m_1,\ldots,m_\ell)_{\s{pk}^{-1}}$ to signify the tuple along with the valid signature on the tuple hash from the participant of public key $\s{pk}$.
In later descriptions, we may take the necessity of signing and verifying as granted and avoid redundant descriptions.
We assume a hash function $H: \{0,1\}^*\to\{0,1\}^\lambda$ simulating the random oracle~\cite{DBLP:conf/ccs/BellareR93}.
Note that based on $H$, we can build $H_z$ for any message $z$ as $H_z(\cdot):=H(z||\cdot)$~\cite{10}.
%

\subsection{Proof-of-Space}\label{sec:gua}
Proof-of-space is an interactive protocol between a prover $\textsf{P}$ and a verifier $\textsf{V}$ that demonstrates
the prover $\textsf{P}$ is storing some data of a certain size.
The PoC protocol in~\cite{2} involves two phases: initialization phase and  execution phase.
\par
Initialization is an interactive process between the prover $\textsf{P}$ and the verifier $\textsf{V}$. It runs on shared inputs $(id, S)$.  $id$ is an identifier to assure that the prover $\textsf{P}$ cannot reuse the same disk space to run PoC for different statement.  $S$ is the amount of storage the prover $\textsf{P}$ wants to dedicate.
After the initialization phase,  \s{P} stores some data \s{F} ,  whereas \s{V} only stores a commitment $\gamma$ to \s{F}.
\par
Execution is an interactive process between the prover $\textsf{P}$ and the verifier $\textsf{V}$. The prover $\textsf{P}$ runs on data \s{F} and the verifier $\textsf{V}$ runs on input $\gamma$. Then the verifier $\textsf{V}$ sends challenges to the prover $\textsf{P}$,  obtains back  the corresponding openings. At the end $\s{V}$ verifies these openings and outputs $accept$ or $reject$.

The security of a PoC protocol was formally defined in~\cite{2}. Specifically, a $(S_0, S_1, T)$-adversarial prover $\hat{\textsf{P}}$ was defined, which means $\hat{\textsf{P}}$'s storage after the initialization phase is bounded by $S_0$, while during the execution phase it runs in storage at most $S_1$ and time at most $T$.  Secure PoC protocols are required to have three properties, which are completeness,  soundness and efficiency.
\begin{enumerate}
\item Completeness: We say a PoC protocol has completeness if the verifier always outputs $accept$ for any honest prover $\textsf{P}$ with probability $1$.
\par
\item Soundness: We say a PoC protocol has soundness if the verifier $\textsf{V}$ outputs $accept$ with a negligible probability for any $(S_0, S_1, T)$-adversarial prover $\hat{\textsf{P}}$ .
\par
\item Efficiency: We say a PoC protocol has efficiency if the verifier $\textsf{V}$ can run in time $O(\s{poly} \log S)$.
\end{enumerate}
We say that a PoC protocol is $(S_0, S_1, T)$-secure if the above three properties are satisfied.

\section{Single-Chain Proof-of-Space}  \label{sec:4}
In this section, we use SpaceMint protocol as the building block
of our multi-chain protocol to be described in the latter section.

\subsection{Graph Labeling Game}
At first, we introduce the graph labeling game~\cite{posw,DBLP:books/daglib/0016239}.
\begin{definition}[Graph Labelling]
We consider a directed acyclic graph (DAG)
$G = (V,E)$ ,which has a vertex set $V = \{0,1,\ldots,S-1\}$ and a hash function $H : \{0,1\}^* \rightarrow \{0,1\}^\lambda$, the label $l_i \in \{0,1\}^\lambda$ for each vertex $i \in V$is recursively computed as $l_i = H(\s{nc},i,l_{p_1},...,l_{p_t})$
where $l_{p_1},...,l_{p_t}$ are the parents of vertex $i$ and $\s{nc}$ is a unique nonce.
\end{definition}
 Graph labeling game is the building block of most PoC protocols. Let $G = (V, E)$ be a directed acyclic graph (DAG) which has $S$ nodes . $H : \{0, 1\}^* \rightarrow \{0, 1\}^\lambda$ is a collision-resistant hash function. For every $id$, a fresh hash function can be sampled: $H_{id} = H(id||\cdot)$.  The PoC protocol based on the graph labeling game is as follows:
\par
\begin{enumerate}
\item Initialization:  First, $\textsf{P}$ computes the labels on all nodes
of $G$ using graph labelling,  and commits to them in $\gamma$ using Merkle-tree Commitment.  Then $\textsf{P}$
gets $q$ challenges from $\textsf{V}$.  For each of the challenges $C =(C_1, . . . , C_q)$,  $\textsf{P}$ opens the label on the $C_i^{\text{th}}$ node
of $G$, as same as the labels of all its parent nodes. Finally, $\textsf{V}$ checks the openings of all challenge nodes and their parents. $\textsf{V}$ also checks if the challenge nodes are computed correctly from their parents.
\par
\item Execution: $\textsf{V}$ chooses $t$ challenge nodes randomly.
Then $\textsf{P}$ sends the opening of these challenge nodes to $\textsf{V}$. This execution phase can be done repeatedly.
\end{enumerate}
%


\subsection{PoC Definition}
Before introducing how to apply any PoC schemes to blockchains, we introduce the formal description of a PoC
at first designed to apply between two parties.
A PoC scheme built on a hash oracle $H$ is described as $\mathrm{\Pi}_{\text{PoC}}=(\s{init},\s{open},\s{vrf})$.
Specifically,
\begin{description}
  \item[Space Commitment.] $\myinit(S)\to (\gamma,\widetilde{\gamma})$ inputs the size of the pebbling graph and returns a pair $(\gamma,\widetilde{\gamma})$ after building up the graph where $\gamma$ is the commitment released to the publicity and $\widetilde{\gamma}$ is a secret to be locally stored.
  \item[Commitment Opening.] $\myopen(S,\widetilde{\gamma},C)\to\tau$ takes as input the graph size, the threshold $\widetilde{\gamma}$, the challenge $C$ and returns a proof $\tau$.
  \item[Verification.] $\myvrf(S,\gamma,C,\tau)\to\{accept, reject\}$ verifies a proof $\tau$ for the challenge $C$ to the graph of size $S$ and public commitment $\gamma$.
\end{description}

\subsection{SpaceMint}
In SpaceMint~\cite{spacemint}, the structure of blocks is identical to Bitcoin blockchain except for containing a space proof instead of a hash solution.
Furthermore, the detailed protocol of SpaceMint is similar to any blockchain, in spite that the mining and chain-competition differ from that of PoW-based blockchains.
Thereby,
it is too redundant to cover every details to introduce the full scheme.
To describe SpaceMint, we only need to enumerate all the difference between SpaceMint and the ordinary blockchain.
\begin{description}
  \item[Initial Step.] For a miner to dedicate $\lambda N$ bits of storage to the blockchain network,
  it computes and stores the labels of their pebbling graph
  to get $(\gamma,\widetilde{\gamma})\gets\myinit(S)$ at the initial stage. Afterwards $sctx=(\s{pk},\gamma)_{\s{pk}^{-1}}$ is broadcast to the network.

  \item[The Mining.]
  Each miner maintains a main chain (the chain branch with the greatest total weight) in its view.
  We denote the chain as $(A_1, A_2, \ldots, A_{i-1})$ and their corresponding proofs as $\tau^1,\tau^2,...,\tau^{i-1}$~($i\ge\Delta$).
  To mine the next block $A_i$, the miner at first  derives the challenge for block $i$ from
the proof of block $i-\Delta$, i.e., $C_i =H(\tau_{{i-\Delta}})  \mod  S$.
    Thereby, the miner obtains $\tau\gets\myopen(S,\widetilde{\gamma},C_i)$ and assembles $A_i$ with $\tau$. This differs from PoW.

\item[Block Verification.]
    Different from the nonce verification in the existing blockchain, we in SpaceMint
    should check the open-up $\tau$ of each block by $\myvrf(S,\gamma,C,\tau)$.

  \item[Chain Weight.]
  Assuming a block is followed by $m$ sequential blocks $(A_1, A_2, \ldots, A_m)$, which have valid proofs $\tau^1,\tau^2,...,\tau^m$ respectively, and that the corresponding space contributed by these miners are $S^1,S^2,...,S^m$. The weight of $A_i$ is defined as follows:
$$
\s{weight}(A_i)=(\frac{H(\tau^i)}{2^\lambda})^{1/{S^i}}
$$
It can be proved that the probability that
$A_i$ has the largest weight among these $m$ blocks equals to his fraction of the total space, which is
$\frac{S^i}{\sum^m_{k=1} S^k}$.
In this way, in a chain competition, the chain branch with the greatest total weight outruns the others.
This differs from the existing blockchain where the block weight is atomic (either one or zero).
\end{description}

\section{A Multi-Chain scheme}  \label{sec:5}
In this section, we introduce our framework for the multi-chain scheme of proof-of-space.
To this end, we first describe our high-level functionality and then propose one possible realization of the functionality based on the
aforementioned SpaceMint-based single-chain protocol.

\subsection{Functionalities}

\begin{figure}[t]
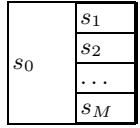

  \centering
  \begin{tabular}{|p{0.74cm} p{0.74cm}|}
    \hline
    $s_0$ &
    \begin{tabular} {|p{0.7cm}|}
    \hline
    $s_1$ \\
    \hline
    $s_2$ \\
    \hline
    $\ldots$ \\
    \hline
    $s_M$ \\
    \hline
    \end{tabular} \\
    \hline
  \end{tabular}
  \caption{Capacity Resource Partition}\label{fig:tabu}
\end{figure}

\begin{figure*}[h]
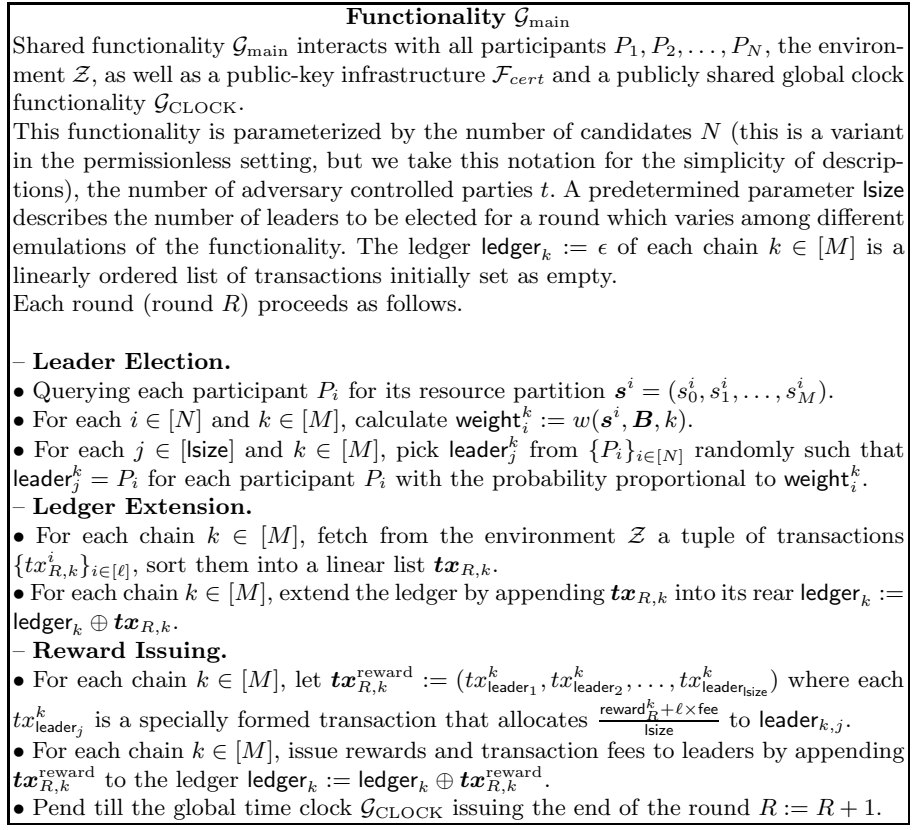

 \centering

\resizebox{\textwidth}{!}{ %
\begin{tabular}{|p{12cm}|} \hline
\multicolumn{1}{|c|}{\textbf{Functionality} $\G_{\text{main}}$}\\
Shared functionality $\G_{\text{main}}$ interacts with all participants $P_1,P_2,\ldots,P_N$, the environment $\mathcal{Z}$, as well as a public-key infrastructure $\mathcal{F}_{cert}$ and a publicly shared global clock functionality $\G_{\text{CLOCK}}$.

This functionality is parameterized by the number of candidates $N$ (this is a variant in the permissionless setting, but we take this notation for the simplicity of descriptions), the number of adversary controlled parties $t$. A predetermined parameter $\s{lsize}$ describes the number of leaders to be elected for a round which varies among different emulations of the functionality. The ledger $\s{ledger}_k:=\epsilon$ of each chain $k\in[M]$ is a linearly ordered list of transactions initially set as empty.

Each round (round $R$) proceeds as follows.\\
\\
-- \textbf{Leader Election.}\\
$\bullet$ Querying each participant $P_i$ for its resource partition $\bm{s}^i=(s_0^i, s_1^i, \ldots, s_M^i)$.\\

$\bullet$ For each $i\in[N]$ and $k\in[M]$, calculate $\s{weight}_i^k:= w(\bm{s}^i,\bm{B},k)$.\\

$\bullet$ For each $j\in[\s{lsize}]$ and $k\in[M]$, pick $\s{leader}^k_j$ from $\{P_i\}_{i\in[N]}$ randomly such that $\s{leader}^k_j=P_i$ for each participant $P_i$  with the probability proportional to $\s{weight}_i^k$.
\\
-- \textbf{Ledger Extension.}\\
$\bullet$ For each chain $k\in[M]$, fetch from the environment $\mathcal{Z}$ a tuple of transactions $\{tx_{R,k}^i\}_{i\in[\ell]}$, sort them into a linear list $\bm{tx}_{R,k}$. \\
$\bullet$ For each chain $k\in[M]$, extend the ledger by appending $\bm{tx}_{R,k}$ into its rear $\s{ledger}_k:=\s{ledger}_k\oplus \bm{tx}_{R,k}$.
\\
-- \textbf{Reward Issuing.}\\
$\bullet$ For each chain $k\in[M]$, let $\bm{tx}_{R,k}^{\textnormal{reward}}:=(tx_{\s{leader}_1}^k,tx_{\s{leader}_2}^k,\ldots,tx_{\s{leader}_\s{lsize}}^k)$ where each $tx_{\s{leader}_j}^k$ is a specially formed transaction that allocates $\frac{\s{reward}^k_R+\ell\times\s{fee}}{\s{lsize}}$ to $\s{leader}_{k,j}$.
\\
$\bullet$ For each chain $k\in[M]$, issue rewards and transaction fees to leaders by appending $\bm{tx}_{R,k}^{\textnormal{reward}}$ to the ledger $\s{ledger}_k:=\s{ledger}_k\oplus \bm{tx}_{R,k}^{\textnormal{reward}}$. \\
$\bullet$ Pend till the global time clock $\G_{\text{CLOCK}}$ issuing the end of the round $R:= R+1$.\\
 \hline
\end{tabular}
 }%
\caption{The Main Functionality}
\label{fig:mpoc_functionality}
\end{figure*}

\begin{figure*}[b]
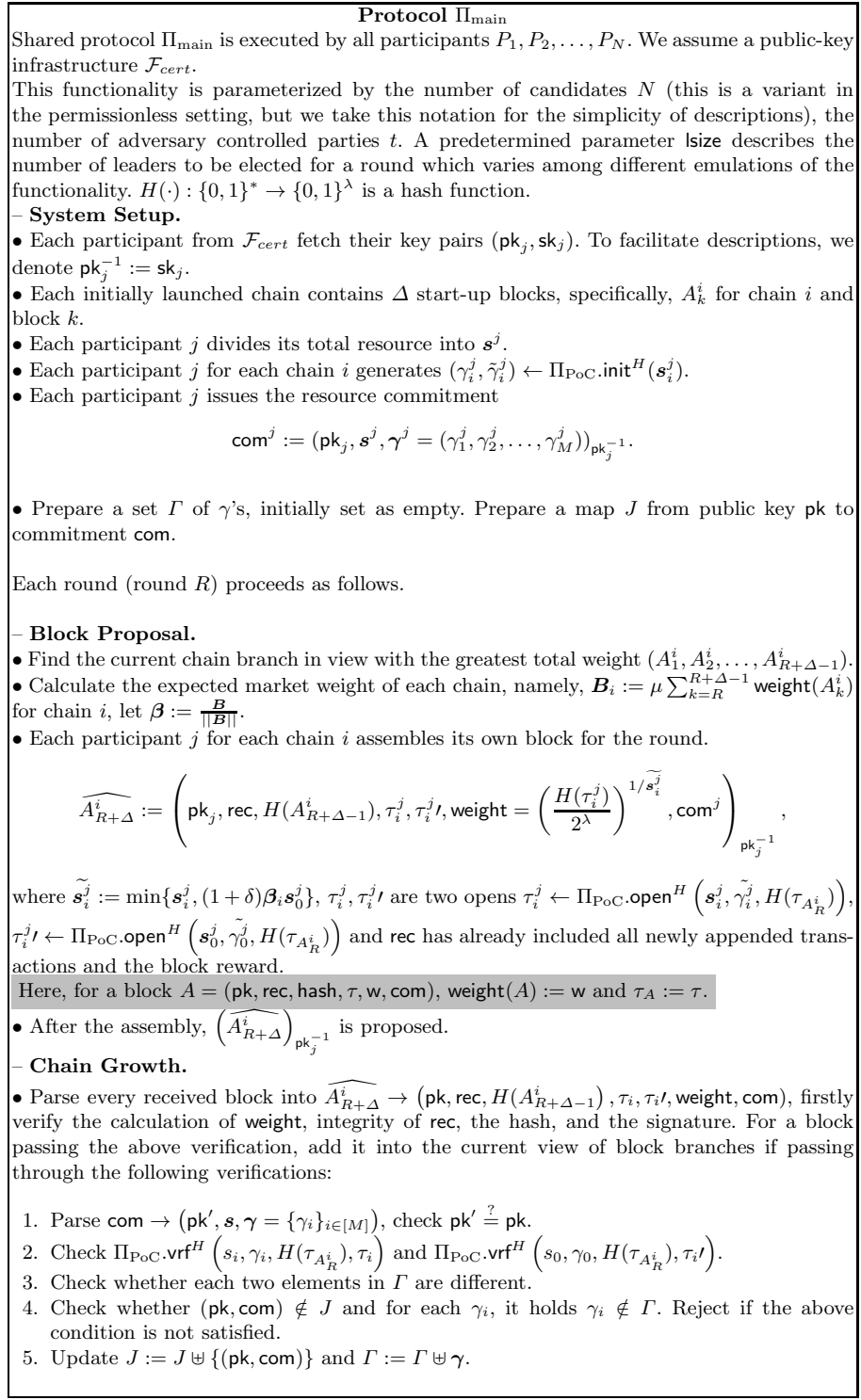

 \centering

\resizebox{\textwidth}{!}{ %
\begin{tabular}{|p{13cm}|} \hline
\multicolumn{1}{|c|}{\textbf{Protocol} $\mathrm{\Pi}_{\text{main}}$}\\
Shared protocol $\mathrm{\Pi}_{\text{main}}$ is executed by all participants $P_1,P_2,\ldots,P_N$. We assume a public-key infrastructure $\mathcal{F}_{cert}$.

This functionality is parameterized by the number of candidates $N$ (this is a variant in the permissionless setting, but we take this notation for the simplicity of descriptions), the number of adversary controlled parties $t$. A predetermined parameter $\s{lsize}$ describes the number of leaders to be elected for a round which varies among different emulations of the functionality.
$H(\cdot): \{0, 1\}^*\to \{0,1\}^\lambda$ is a hash function.\\

-- \textbf{System Setup.}\\
$\bullet$ Each participant from $\mathcal{F}_{cert}$ fetch their key pairs $(\s{pk}_j,\s{sk}_j)$. To facilitate descriptions, we denote $\s{pk}_j^{-1}:=\s{sk}_j$.\\
$\bullet$ Each initially launched chain contains $\Delta$ start-up blocks, specifically, $A_k^i$ for chain $i$ and block $k$.\\
$\bullet$ Each participant $j$ divides its total resource into $\mS^j$.\\
$\bullet$ Each participant $j$ for each chain $i$ generates $(\gamma^j_i,\tilde{\gamma}_i^j)\gets \myinit(\mS_i^j)$.\\
$\bullet$ Each participant $j$ issues the resource commitment
$$
\s{com}^j:=(\s{pk}_j,\mS^j,\bm{\gamma}^j=(\gamma_1^j,\gamma_2^j,\ldots,\gamma_M^j))_{\s{pk}_j^{-1}}.
$$\\
$\bullet$ Prepare a set $\Gamma$ of $\gamma$'s, initially set as empty. Prepare a map $J$ from public key $\s{pk}$ to commitment $\s{com}$.


\\
Each round (round $R$) proceeds as follows.\\
\\
-- \textbf{Block Proposal.}\\
$\bullet$ Find the current chain branch in view with the greatest total weight $(A^i_{1},A^i_{2},\ldots,A^i_{R+\Delta-1})$.\\
$\bullet$ Calculate the expected market weight of each chain, namely, $\mB_i:=\mu\sum_{k=R}^{R+\Delta-1}{\s{weight}(A_k^i)}$ for chain $i$, let $\mbeta:=\frac{\mB}{||\mB||}$.\\
$\bullet$ Each participant $j$ for each chain $i$ assembles its own block for the round.
$$
\widehat{A_{R+\Delta}^i} := \left( \s{pk}_j, \s{rec}, H(A^i_{R+\Delta-1}), \tau_i^j, \tau_i^j\prime, \s{weight}=\left(\frac{H(\tau_i^j)}{2^\lambda}\right)^{{1}/\widetilde{\mS_i^j}} , \s{com}^j\right)_{\s{pk}_j^{-1}},
$$
where $\widetilde{\mS_i^j}:=\min\{\mS_i^j, (1+\delta)\mbeta_i\mS_0^j\}$, $\tau_i^j, \tau_i^j\prime$ are two opens $\tau_i^j\gets \myopen\left(\mS_i^j,\tilde{\gamma_i^j},H(\tau_{A^i_R})\right)$,
$\tau_i^j\prime\gets \myopen\left(\mS_0^j,\tilde{\gamma_0^j},H(\tau_{A^i_R})\right)$ and $\s{rec}$ has already included all newly appended transactions and the block reward.

\colorbox{lightgray}{Here, for a block $A=(\s{pk},\s{rec},\s{hash},\tau,\s{w},\s{com})$, $\s{weight}(A):=\s{w}$ and $\tau_A:=\tau$.}\\

$\bullet$ After the assembly, $\left(\widehat{A_{R+\Delta}^i}\right)_{\s{pk}_j^{-1}}$ is proposed.
\\
-- \textbf{Chain Growth.}\\
$\bullet$ Parse every received block into $\widehat{A_{R+\Delta}^i} \to \left(\s{pk},\s{rec},H(A^i_{R+\Delta-1}\right), \tau_i, \tau_i\prime, \s{weight},\s{com})$, firstly verify the calculation of  $\s{weight}$, integrity of $\s{rec}$, the hash, and the signature. For a block passing the above verification, add it into the current view of block branches if passing through the following verifications:
\begin{enumerate}
  \item Parse $\s{com}\to\left(\s{pk}',\mS,\bm{\gamma}=\{\gamma_i\}_{i\in[M]}\right)$,
      check $\s{pk}'\stackrel{?}{=}\s{pk}$.
  \item Check $\myvrf\left(s_i,\gamma_i,H(\tau_{A^i_R}),\tau_i\right)$ and $\myvrf\left(s_0,\gamma_0,H(\tau_{A^i_R}),\tau_i\prime\right)$.
  \item Check whether each two elements in $\Gamma$ are different.
  \item Check whether $(\s{pk},\s{com})\notin J$ and for each $\gamma_i$, it holds $\gamma_i\notin\Gamma$. Reject if the above condition is not satisfied.
  \item Update $J:=J\uplus\{(\s{pk},\s{com})\}$ and $\Gamma:=\Gamma\uplus\bm{\gamma}$.
\end{enumerate}
\\
 \hline
\end{tabular}
 }%
\caption{The Main Protocol}
\label{fig:mpoc_protocol}
\end{figure*}

In a high level, as Fig.~\ref{fig:tabu}, we expect to have all participants pay half of the total storage for purchasing the upper bound of their admissible storage (the shared proof part, denoted as $\mS_0$),
and allocate the rest half of storage on each supportive chains respectively according to their market weight~($\mS_i$ for chain $i$).
Specifically, to a chain $i$ with $\mbeta_i=\frac{\mB_i}{||\mB||}$ ($\mB_i$ is the sum of $\mS_i$ for all participants) fraction of total storage proof, we stimulate the participant to allocate $\mS_i=\mbeta_i\mS_0$ storage on it.
To allow for the start-up of new chains, we in actual ask that $\mS_i\le(1+\delta)\mbeta_i\mS_0$.
Clearly, as an equilibrium of economics, the total storage proof behind a blockchain is proportional to the total market weight of the chain.
Therefore, the market weight of each chain is described by the total storage proof behind the chain.
Also, we assume that the amount of released coins and transaction fees is the same to all chains for each round and hence the per-coin value is proportional to our defined market weight.

In all, our system supports $M$ different PoC-based blockchains.
Each ($j^{\textnormal{th}}$) of them elects leaders according to $s^i_j$.
Specifically, each miner ($i^{\textnormal{th}}$) has the chance of being the leader of each round by
$$
w(\bm{s},\bm{B},k) := \left\{ \begin{matrix}
\frac{\bm{s}_k}{\mB_k+\mS_k} , & \mS_k\le (1+\delta)\mbeta'_k\mS_0 \\
\frac{\widetilde{\bm{s}_k}}{\mB_k+\widetilde{\mS_k}} , & \textnormal{otherwise}\\
 \end{matrix} \right.
$$
where $\mB'_k := \mB_k+\mS_k$, $\mbeta':=\frac{\mB'}{||\mB'||}$,
and $\widetilde{\bm{s}_k} := \max_{\mS_k}\{\mS_k\le(1+\delta)\mbeta_k'\mS_0\}$.
The exact formula of $\mS_k$ is easy to uncover by solving a quadratic equation (hence can be described in codes when implemented). However, we remain the current formulation for readability.
To facilitate proofs in later sections, we observe that $\psi_{\mB,k}(\mS):=w(\mS,\mB,k)$ is continuous on $(\mathbb{R}^+)^{M+1}$ and is smooth almost everywhere (a.e.) on $(\mathbb{R}^+)^{M+1}$ except for $\{\mS\in(\mathbb{R}^+)^{M+1}|\mS_k=(1+\delta)\mbeta_k'\mS_0 \}$ with zero Lebesgue-measure for each $k\in[M]$.

The detailed description of the functionality is shown in Fig.~\ref{fig:mpoc_functionality},
where we aim to provide a framework more general than our specific way of realization so the size of leaders to be elected for each round is parameterized by $\s{lsize}$.
In fact, $\s{lsize}$ of most existing blockchains and our realization based on SpaceMint is simply $1$.

\subsection{A Protocol for $\s{lsize}=1$}
To build our realization of the above functionality with $\s{lsize}=1$ based on our SpaceMint-based single-chain PoC protocol in the previous section,
most steps are natural implementations of our building block.
However, few issues should be carefully considered alongside.
To establish a commonly verifiable computation on the market weight $\mB$ in a decentralized environment,
we assume that the market weight of each chain is proportional to the summed weight of latest $\Delta$ consequent blocks by a constant factor $\mu$.
To make sure that each space proof is used for only one identity, there should be a pool $\Gamma$ of
space commits locally store in each node.
We have put the specific protocol in Fig.~\ref{fig:mpoc_protocol}
since details are not crucial to the roadmap of our paper.

\section{Framework Analysis}  \label{sec:6}
The realization of our functionality in Fig.~\ref{fig:mpoc_functionality} should satisfy
both safety and liveness. The analysis of basic safety and liveness is specific to the way of realization.
In our protocol to realize the functionality with $\s{lsize}=1$,
the basic safety and liveness properties of the realization are inherited from SpaceMint.
In this section, we focus on two higher-leveled properties of our general framework for a PoC multi-chain future, namely, incentive compatibility and the system security that a considerable fraction of global storage resource has to be held by an adversary to devastate any chain under the framework, even for newly launched chains or chains with the least market weight.

\subsection{Incentive Compatibility}
In this part, we prove that our framework is incentive-compatible.
Namely, participants achieve their greatest utility (the most expected revenue) with our desired strategy of storage resource partition.
Without loss of generality, we consider $\delta=0$ to simplify our proofs.
Also, we assume $\mbeta'=\mbeta$ from time to time to simplify proofs.
To begin with, we formally describe the partition strategy.
\begin{definition}[Capacity Resource Partition]
For each participant with total capacity $c$, all its admissible resource partitions form the space $\mathscr{D}_c^{M+1}\subset (\mathbb{R}^+)^{M+1}$ that $\sum_{i=0}^M {s_i} = c$ for each $\bm{s}=(s_0,s_1,\ldots,s_M)\in\mathscr{D}_c^{M+1}$.
Furthermore, we introduce $\mathscr{D}^{M+1}:=\cup_{c\in{\mathbb{R}^+}}\mathscr{D}_c^{M+1}$.
\end{definition}

Likewise, we denote $\so^M$ as the set of vectors $(\beta_1,\beta_2,\ldots,\beta_M)\subset (\mathbb{R}^+)^{M}$ of length $M$ that $\sum_{i=0}^M {\beta_i} = 1$.

%
The utility function is a mapping from a partition strategy to the expected revenue for each round.
To define the global utility function, we firstly introduce the chain-specific utility function.
\begin{definition}[Chain-Specific Utility Function]
The chain-specific utility function for chain $k$ is
$$
\val(\bm{s},\bm{B},k) :=
r\times w(\bm{s},\bm{B},k)\times \omega\mB_k.
$$
Intuitively, $r$ is a positive constant, $w(\bm{s},\bm{B},k)$ (proportionally) describes the probability of becoming the leader of each round, and the per-coin value of chain $k$ is proportional to the market weight of the chain (hence is $\omega\mB_k$ for a positive constant $\omega$).
Note that we have assume that the minted coins and transaction fees are the same for each chain and for each round, hence it is not a necessity to add in another factor for it in the multiplication.
\end{definition}
At a first glance, the formula above seems fit only into the scenario of $\s{lsize}=1$.
Actually, in the multi-leader case, each participant has $\s{lsize}$ times the chance to become an leader while each leader has only $\frac{1}{\s{lsize}}$ fraction of total revenue. By $\frac{\s{lsize}\cdot\bm{s}_k}{\bm{B}'[k]} \cdot \frac{r}{\s{lsize}} \cdot \omega\mB'_k = \frac{\bm{s}_k}{\bm{B}'_k} \cdot r \cdot \omega\mB'_k$, the expected revenue remains identical to the one-leader case.

\begin{definition}[The Utility Function]
Thereby the final utility function is the sum of all chain-specific utility functions
$$
U(\mS,\mB) := \sum_{k=1}^M \val(\mS,\mB,k).
$$
\end{definition}

The optimal resource partition strategy is clearly depicted as below.
\begin{definition}[Optimal Resource Partition]
To a participant with total storage resource $c$, its optimal resource partition strategy in the environment where each chain $i$ has market weight $\mB_i$ is
$$
\widetilde{\opt}(c,\bm{B}) := \textnormal{argmax}_{\bm{s}\in\mathscr{D}_c^{M+1}} U(\mS,\mB)
$$
\end{definition}

We aim to show that our desired resource partition strategy optimizes the utility function.
That is to have the optimal resource partition equals our desired one.
\begin{theorem} \label{lemma:final}
For any $c\in\mathbb{R}^+$ and  any $\bm{B}\in \mathscr{D}^M_c$,
$$
 \widetilde{\opt}(c,\bm{B}) = \frac{c}{2} \circ \frac{c}{2} \bm{\beta}
$$
where $\bm{\beta} = \frac{\bm{B}}{||\bm{B}||}$.
\end{theorem}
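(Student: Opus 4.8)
The plan is to fix the shared-storage coordinate $s_0 = \mS_0$ and solve the problem in two stages: first maximize over the chain-specific split $(\mS_1,\ldots,\mS_M)$ for each fixed $s_0$, obtaining an inner value $V(s_0)$, and then maximize $V$ over the single variable $s_0 \in [0,c]$. Under the simplifications $\delta = 0$ and $\mbeta' = \mbeta$, the leader probability on chain $k$ depends on the split only through the capped quantity $\widetilde{s_k} := \min\{\mS_k, \beta_k s_0\}$, so that $\val(\mS,\mB,k) = r\omega\, f_k(\widetilde{s_k})$ with $f_k(x) := \frac{\mB_k x}{\mB_k + x}$. Each $f_k$ is strictly increasing and concave, and the positive constant $r\omega$ is irrelevant to the argmax. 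First I would record the immediate consequence that any storage placed on chain $k$ above the cap $\beta_k s_0$ contributes nothing, so the inner problem is to maximize $\sum_{k=1}^M f_k(\widetilde{s_k})$ subject to $\widetilde{s_k} \le \beta_k s_0$ and $\sum_k \widetilde{s_k} \le c - s_0$.

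The crux is a short computation: the cap-free optimal allocation of a fixed total budget $T$ across the chains. Since $f_k'(x) = \mB_k^2/(\mB_k + x)^2$, equalizing the marginals $f_k'(s_k) = \nu$ forces $\mB_k + s_k$ to be proportional to $\mB_k$, hence $s_k$ proportional to $\mB_k$; imposing $\sum_k s_k = T$ then gives the clean identity $s_k = \beta_k T$. In words, the revenue-optimal chain split is always proportional to the market-weight vector $\mbeta$, which is exactly the shape appearing in the target formula.

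With this in hand I split into two regimes according to $s_0$. When $s_0 \le c/2$, the total cap $\sum_k \beta_k s_0 = s_0$ is at most the budget $c - s_0$, so all caps can be saturated; monotonicity of each $f_k$ makes saturation optimal, giving $V(s_0) = \sum_k f_k(\beta_k s_0)$, which is strictly increasing in $s_0$. When $s_0 \ge c/2$, the budget $c - s_0$ is binding, and the cap-free optimum $s_k = \beta_k(c - s_0)$ already respects every cap, because $\beta_k(c - s_0) \le \beta_k s_0$ holds precisely when $s_0 \ge c/2$; hence it is the constrained optimum and $V(s_0) = \sum_k f_k(\beta_k(c - s_0))$, strictly decreasing in $s_0$. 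The two branches agree at $s_0 = c/2$, so $V$ attains its unique maximum there. Finally, at $s_0 = c/2$ the budget $c - s_0 = c/2$ coincides with the total cap $\sum_k \beta_k s_0$, which leaves no slack and forces $\mS_k = \beta_k s_0 = \tfrac{c}{2}\beta_k$ for every $k$; thus the maximizer is exactly $\tfrac{c}{2} \circ \tfrac{c}{2}\mbeta$, proving $\widetilde{\opt}(c,\mB) = \tfrac{c}{2} \circ \tfrac{c}{2}\mbeta$.

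I expect the main obstacle to be the bookkeeping around the capped, piecewise definition of $w$ rather than any single sharp inequality: one must argue rigorously that storage exceeding a cap is genuinely wasted (so the inner maximum is governed by the $\widetilde{s_k}$), and that the cap-free proportional allocation stays feasible throughout the regime $s_0 \ge c/2$. A smaller technical nuisance is the non-smoothness of $w$ on the measure-zero set $\{\mS_k = \beta_k s_0\}$ flagged after its definition; the monotonicity argument avoids differentiating across it, since $V$ is continuous at $s_0 = c/2$ and the optimum sits exactly on that boundary.
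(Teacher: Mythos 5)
Your proof is correct, and it takes a genuinely different route from the paper's. The paper first proves an exchange lemma (Lemma~\ref{lemma:A}): any partition with $\bm{s}_i > \beta_i \bm{s}_0$ for some $i$ is dominated by one that shifts the excess into the shared pool, which confines the search to the domain where every cap holds; on that domain the utility is smooth and concave, and the paper identifies the maximizer by exhibiting Karush--Kuhn--Tucker multipliers for the resulting convex program (Lemma~\ref{lemma:B}). You instead fix $s_0$, observe that the objective depends on the chain split only through the capped values $\min\{\bm{s}_k,\beta_k s_0\}$ --- this single observation absorbs the role of the paper's exchange lemma --- and solve the inner problem via the marginal-equalization identity $f_k'(x)=\bm{B}_k^2/(\bm{B}_k+x)^2$, which shows the budget-constrained cap-free optimum is always the proportional split $\beta_k T$. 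The outer problem then collapses to a one-variable unimodality argument: $V$ increases on $[0,c/2]$ because raising $s_0$ raises all the (binding) caps, and decreases on $[c/2,c]$ because there the budget binds and shrinks. Your route is more elementary (no multiplier verification), makes the trade-off driving the half-and-half split transparent, and yields uniqueness of the maximizer cleanly; the paper's KKT computation is more mechanical but would adapt more directly to a modified constraint set. One small point worth tightening: at $s_0=c/2$ the claim that ``no slack forces $\bm{s}_k=\beta_k s_0$'' deserves a sentence --- if $\sum_k t_k = c/2$ then any $t_j$ strictly below its cap forces some $t_l$ above its cap, while if $\sum_k t_k < c/2$ then some $t_j$ sits strictly below its cap and can be increased --- or one can simply invoke strict concavity of the $f_k$, which makes the proportional point the unique budget-constrained optimum.
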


To prove this theorem, we at first introduce two lemmas.

\begin{lemma} \label{lemma:B}
For any positive integer $n$, positive values $k,\ (\mbeta_1,\mbeta_2,\ldots,\mbeta_n)$ with $\sum_{i=1}^n\mbeta_i=1$,
function $F(\bm{x}):= - \sum_{i=1}^n \mbeta_i \cdot \frac{x_i}{k\mbeta_i+x_i}$ ($\bm{x}=(x_0,x_1,\ldots,x_n)$)
achieves its minimum in $\hat{\bmx}=(\frac{c}{2}, \frac{\mbeta_1 c}{2},\ldots, \frac{\mbeta_n c}{2})$
subject to
\begin{itemize}
\item $g_i(\bm{x}) := x_i-\mbeta_i x_0 \le 0$ for each $i\in[n]$,
\item $g_{n+i+1}(\bmx):=-x_i\le 0$ for each $i\in[n]\cup\{0\}$,
\item $h(\bm{x}):=\sum_{i=0}^n x_i-c=0$.
\end{itemize}
\end{lemma}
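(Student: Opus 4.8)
The plan is to recognize Lemma~\ref{lemma:B} as a convex minimization problem and certify $\hat{\bmx}$ through the Karush--Kuhn--Tucker (KKT) conditions, whose \emph{sufficiency} is guaranteed by convexity together with affineness of the constraints. First I would check that $\hat{\bmx}$ is feasible. Substituting $\hat{x}_0=c/2$ and $\hat{x}_i=\mbeta_i c/2$ gives $g_i(\hat{\bmx})=\mbeta_i c/2-\mbeta_i(c/2)=0$ for every $i\in[n]$, so all $n$ upper-bound constraints are \emph{active}; every coordinate is strictly positive, so all nonnegativity constraints are slack; and $\sum_{i=0}^n\hat{x}_i=c/2+(c/2)\sum_{i=1}^n\mbeta_i=c$, confirming $h(\hat{\bmx})=0$.

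Next I would establish convexity. Because $F$ separates as a sum over the coordinates $x_1,\dots,x_n$ and does not depend on $x_0$, its Hessian is diagonal with entries $\frac{\partial^2}{\partial x_i^2}\!\left(-\mbeta_i\frac{x_i}{k\mbeta_i+x_i}\right)=\frac{2k\mbeta_i^2}{(k\mbeta_i+x_i)^3}\ge 0$ throughout the feasible region $\{x_i\ge 0\}$; equivalently, each summand is convex because $t\mapsto \frac{t}{k\mbeta_i+t}$ is concave for $t>-k\mbeta_i$. Since all three families of constraints are affine, the feasible set is a convex (indeed compact) polytope, and for a convex program with affine constraints any feasible point that admits KKT multipliers is a \emph{global} minimizer. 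This is precisely the sufficiency direction I will invoke, which lets me avoid any second-order or boundary analysis.

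It then remains to exhibit the multipliers. Writing $\nu$ for the equality multiplier and $\lambda_i\ge 0$ for $g_i$, and noting that complementary slackness forces the nonnegativity multipliers to vanish at $\hat{\bmx}$, stationarity in $x_0$ reads $\nu=\sum_{i=1}^n\lambda_i\mbeta_i$ and stationarity in $x_j$ reads $\lambda_j=\frac{k\mbeta_j^2}{(k\mbeta_j+x_j)^2}-\nu$. The crucial simplification is that at $\hat{\bmx}$ one has $k\mbeta_j+\hat{x}_j=\mbeta_j(k+c/2)$, so $\frac{k\mbeta_j^2}{(k\mbeta_j+\hat{x}_j)^2}=\frac{k}{(k+c/2)^2}=:\kappa$ is independent of $j$. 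Substituting into the two stationarity relations and using $\sum_i\mbeta_i=1$ yields $\nu=\kappa-\nu$, hence $\nu=\lambda_j=\kappa/2=\frac{k}{2(k+c/2)^2}>0$ for all $j$. Dual feasibility holds, so $\hat{\bmx}$ is a global minimizer of $F$ on the feasible set.

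The main obstacle is not the computation but justifying sufficiency: KKT stationarity alone only pins down a critical point, and the constrained optimum could a priori sit on a face where some $x_i=0$ or where only some of the $g_i$ are tight. The convexity observation is exactly what upgrades the KKT certificate to global optimality and rules out such alternatives, so verifying the concavity of each $\frac{t}{k\mbeta_i+t}$ is the linchpin of the argument. If one additionally wants uniqueness, I would note that $F$ is strictly convex in the directions $x_1,\dots,x_n$, which pins those coordinates, after which the active constraints $x_i=\mbeta_i x_0$ together with $h(\bmx)=0$ force $x_0=c/2$ as well.
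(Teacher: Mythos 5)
Your proof is correct and follows essentially the same route as the paper: establish convexity of $F$ via the diagonal Hessian, observe that all constraints are affine so KKT is sufficient, and certify $\hat{\bmx}$ by exhibiting nonnegative multipliers with complementary slackness. One remark: your gradient $\partial F/\partial x_j=-k\mbeta_j^2/(k\mbeta_j+x_j)^2$ and the resulting multipliers $\lambda_j=\nu=\frac{k}{2(k+c/2)^2}$ are the correct ones, whereas the paper's proof writes the denominator as $x_i(k\mbeta_i+x_i)$ and obtains $\mu_i=\nu=\frac{k}{c(k+c/2)}$ --- a minor computational slip that your version silently repairs while reaching the same conclusion.
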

\begin{proof}
To begin with, we show that $F$ is convex  on $(\mathbb{R}^+)^{n+1}$.
It is easy to observe that the Hessian matrix of $F$ is diagonal since
$\frac{\partial^2 F}{\partial x_i\partial x_j} =0$ for each $i\ne j$.
Each element in the diagonal
$$
\frac{\partial^2 F}{\partial x_i^2}=\frac{2\mbeta_i}{(k\mbeta_i+x_i)^2}\left( 1- \frac{x_i}{k\mbeta_i+x_i}\right)
$$
is positive so the Hessian matrix is semi-definite and $F$ is convex on $(\mathbb{R}^+)^{n+1}$.
Since $F$ and each $g_i$ are convex and $h$ is an affine function,
this turns out to be a convex optimization and
any minimal value of $F$ is its minimum.
According to Karush-Kuhn-Tucker~(KKT) conditions, we only need to show that
\begin{itemize}
\item $\triangledown L(\hat{\bmx})=\bm{0}$,
\item $\mu_i g_i(\hat{\bmx})=0$ for all $i\in[2n+1]$,
\end{itemize}
where
\begin{align}
L(\bmx) & =F(\bmx) + \sum_{i=1}^{2n+1}{\mu_i g_i(\bmx)} + \nu h(\bmx) \\
&= - \sum_{i=1}^n \mbeta_i\cdot \frac{x_i}{k\mbeta_i+x_i} + \sum_{i=1}^n \mu_i \left(x_i-\mbeta_ix_0\right) + \nu \left(\sum_{i=0}^n x_i -c\right),
\end{align}
$\mu_i=\nu=\frac{k}{c(k+c/2)}$ for each $i\in[n]$.
For each natural number $i\le n$, $\mu_{n+i+1}$ are set to be zero so $\sum_{i=n+1}^{2n+1}\mu_ig_i(\bmx)=0$.
In fact,
\begin{enumerate}
  \item From simple derivations, $$
        \frac{\partial L(\bmx)}{\partial x_0}=\sum_{i=1}^n -\mu_i\mbeta_i + \nu = \frac{k}{c(k+\frac{c}{2})}(1-\sum_{i=1}^n \mbeta_i)=0
        $$ and for each $i\in[n]$,
        $$
        \frac{\partial L(\bmx)}{\partial x_i}= -\mbeta_i \cdot \frac{k\mbeta_i}{x_i(k\mbeta_i+x_i)} +\mu_i +\nu.
        $$
        Thereby,
        \begin{align*}
        \frac{\partial L(\hat{\bmx})}{\partial x_i} &= -\mbeta_i \cdot \frac{k\mbeta_i}{\frac{\mbeta_i c}{2}(k\mbeta_i+\frac{\mbeta_i c}{2})} + \frac{2k}{c(k+\frac{c}{2})} \\
        &=  - \frac{2k}{c(k+\frac{ c}{2})} + \frac{2k}{c(k+\frac{c}{2})}=0.
        \end{align*}
        As a result, we conclude that $\triangledown L(\hat{\bmx})=\bm{0}$.
  \item The second condition is easy to hold since $g_i(\hat{\bmx})$ = 0 for each $i\le n$ and $\mu_i=0$ for each $i> n$.
\end{enumerate}
Therefore, $f$ achieves its minimum in $\hat{\bmx}$.
\end{proof}

Intuitively, Lemma.~\ref{lemma:B} alone has proved Theorem.~\ref{lemma:final} since by having $k=||\mB||$,
$$
U(\mS,\mB) = -F(\mS) \times r\omega||\mB||
$$
within the boundary in the lemma.
Since the Hessian matrix is semi-definite either inside the boundary~(shown in the lemma proof)
or outside the boundary~(easy to verify) and $U$  is obviously continuous everywhere and smooth a.e. except for a subset with zero Lebesgue measure, $U$ is almost convex and the local maximization is actually the global optimization.
However, to more strictly prove the theorem, we still require the following lemma.

\begin{lemma} \label{lemma:A}
For any space division strategy on $n$ chains, suppose $\mS^{A} = (\mS_{0}^{A}, \mS_{1}^{A}, \ldots, \mS_{n}^{A})$,
if $\exists i \in [n]: \mS_{i}^{A} > \mbeta_{i}\mS_{0}^{A}$ , then there always exists another strategy whose utility is better than $\mS^{A}$.
\end{lemma}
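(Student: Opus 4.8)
The plan is to show that any allocation violating a cap constraint leaves storage \emph{idle} from the viewpoint of the utility function, and that redeploying this idle storage produces a strict improvement. The starting observation is that, with $\delta=0$ and under the stated simplification $\mbeta'=\mbeta$, the weight contributed by an over-allocated chain $i$ (one with $\mS_i^A>\mbeta_i\mS_0^A$) is computed from the truncated quantity $\widetilde{\mS_i}=\mbeta_i\mS_0^A$ rather than from the true allocation $\mS_i^A$. Consequently the excess $\epsilon:=\mS_i^A-\mbeta_i\mS_0^A>0$ adds nothing to $\val(\mS^A,\mB,i)$, and nothing to any other chain's term either, so it is pure waste.

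First I would define a \emph{trimmed} strategy $\mS^C$ obtained from $\mS^A$ by setting $\mS_i^C:=\min\{\mS_i^A,\mbeta_i\mS_0^A\}$ for every $i\in[n]$ while leaving $\mS_0^C:=\mS_0^A$ untouched. By the observation above, each chain's weight---and hence each term $\val(\mS^C,\mB,k)$---is unchanged, so $U(\mS^C,\mB)=U(\mS^A,\mB)$. The role of $\mS^C$ is purely bookkeeping: it consumes strictly less total storage, freeing a surplus $\sigma:=c-\sum_{k=0}^n\mS_k^C\ge\epsilon>0$ that we are now free to spend.

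Next I would spend $\sigma$ so as to raise $U$ strictly. If, after trimming, some chain $j$ still lies strictly below its cap, pouring a little of $\sigma$ into $\mS_j$ increases the strictly-increasing factor $\frac{\mS_j}{\mB_j+\mS_j}$ and therefore $U$. The only delicate case is when every chain sits exactly at its cap $\mS_j^C=\mbeta_j\mS_0^C$; there, enlarging a single chain would break feasibility, so instead I would scale the configuration outward by placing $\sigma/2$ into $\mS_0$ and $\mbeta_j\sigma/2$ into each chain $j$. Since $\sum_j\mbeta_j=1$, this exactly exhausts $\sigma$, keeps every constraint tight ($\delta_j=\mbeta_j\delta_0$ with $\delta_0=\sigma/2$), and strictly increases every chain allocation, again pushing $U$ up. In either case the new strategy beats $\mS^A$, which is the claim.

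I expect the main obstacle to be the bookkeeping around the self-referential definitions: the cap $\mbeta'_k\mS_0$ and the truncation $\widetilde{\mS_k}$ nominally depend on $\mS_k$ through $\mB'$, so the clean \emph{excess-is-wasted} picture only holds once one commits to $\delta=0$ and $\mbeta'=\mbeta$. A secondary care point is to confirm that trimming genuinely leaves all other chains' weights intact---no chain's utility term depends on another chain's allocation once $\mB$ is held fixed---and that the outward scaling used in the all-tight case stays inside $\mathscr{D}_c^{M+1}$ and respects every constraint $g_i\le 0$.
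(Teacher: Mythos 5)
Your proof is correct, and it rests on the same key observation as the paper's---that with $\delta=0$ and $\mbeta'=\mbeta$ the utility depends only on the truncated quantities $\s{seff}_k=\min\{\mS_k,\mbeta_k\mS_0\}$ and is strictly increasing in each of them---but the construction is genuinely different. The paper performs a single local rebalancing: it picks the most over-allocated chain $a_1$ (the one maximizing $\mS_{a_1}^{A}/\mbeta_{a_1}$), pools $\mS_0^{A}+\mS_{a_1}^{A}$ and re-splits it in ratio $1:\mbeta_{a_1}$ so that $a_1$ lands exactly on its new, larger cap; the resulting increase of $\mS_0$ strictly raises the effective space of $a_1$ and of every other over-allocated chain while leaving the under-allocated ones untouched. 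You instead trim all excesses at once (a utility-preserving step that frees a surplus $\sigma>0$) and then redistribute $\sigma$, either into a chain still strictly below its cap or proportionally as $(\sigma/2,\mbeta_1\sigma/2,\ldots,\mbeta_n\sigma/2)$. Your route makes the ``excess is wasted'' intuition explicit and avoids ordering the over-allocated indices; the paper's route never leaves the feasible set, whereas your intermediate $\mS^{C}$ has total strictly less than $c$ and so lies outside $\mathscr{D}_c^{M+1}$. One small loose end on your side: in the first redistribution case, if $\sigma$ exceeds the gap $\mbeta_j\mS_0^{C}-\mS_j^{C}$ of the chosen chain $j$, ``pouring a little of $\sigma$'' leaves part of the budget unspent and the result is again not a valid strategy; you should say where the remainder goes (dumping it into $\mS_0$ works, since enlarging $\mS_0$ only raises caps and weakly increases $U$). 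In fact your proportional-scaling move handles both cases uniformly---it strictly increases every $\s{seff}_k$ whether or not the chains sit exactly at their caps---so the case split could be dropped altogether.
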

\begin{proof}
If there exists $i \in [n]: \mS_{i}^{A} > \mbeta_{i}\mS_{0}^{A}$, then there must exist a set of index $I^{A} = \{ a_{1}, a_{2}, \ldots, a_{exc} \} \subset [n]$ where $\forall i \in I^{A}$, $\mS_{i}^{A} > \mbeta_{i}\mS_{0}^{A}$ and $\forall i \notin I^{A}$, $\mS_{i}^{A} \le \mbeta_{i}\mS_{0}^{A}$ both satisfy.
Without loss of generality, we assume
$$\frac{\mS_{a_{1}}^{A}}{\mbeta_{a_{1}}} \ge \frac{\mS_{a_{2}}^{A}}{\mbeta_{a_{2}}} \ge \ldots \ge \frac{\mS_{a_{exc}}^{A}}{\mbeta_{a_{exc}}} > \mS_{0}^{A}.$$

Consider another strategy $\mS^{A+} = (\mS_{0}^{A+}, \mS_{1}^{A+}, \dots, \mS_{n}^{A+})$. When comparing $\mS^{A+}$ with $\mS^{A}$, the new strategy just subtracts some space division from $\mS_{a_{1}}$ and adds it to $\mS_{0}$. In detail, $\mS^{A+}$ can be formally expressed as
$$ \mS_{i}^{A+}=\left\{
\begin{array}{rcl}
&\frac{\mbeta_{a_1}}{\mbeta_{a_1}+1}(\mS_{0}^{A}+\mS_{a_{1}}^{A})  &\qquad i = a_{1}\\
&\frac{1}{\mbeta_{a_1}+1}(\mS_{0}^{A}+\mS_{a_{1}}^{A})   &\qquad i = 0\\
&\mS_{i}^{A}   &\qquad o.w.\\
\end{array} \right. $$

Notice that if $\mS^{A}$ is a valid space division strategy, then $\mS^{A+}$ is also a valid strategy, as $\sum_{i = 0}^{n}\mS_{i}^{A+} = \sum_{i = 0}^{n}\mS_{i}^{A} = c$.

Now we compare the utility of $\mS_{i}^{A}$ with that of $\mS_{i}^{A+}$. Consider the utility function $U$ for a space division strategy. If we denote the truly effective space division for chain $i$ as $\s{seff}_{i}$, where $$\s{seff}_{i} = \min \{ \mbeta_{i}\mS_{0},\mS_{i} \}.$$
Then, it is obvious that the utility function is monotonic for every $\s{seff}_{i}$ where $i \in [n]$. As a result, if $\forall i \in [n], \s{seff}_{i}^{A+} \ge \s{seff}_{i}^{A}$ and $\exists i \in [n], \s{seff}_{i}^{A+} > \s{seff}_{i}^{A}$, then the utility of strategy $\mS^{A+}$ is better than that of $\mS^{A}$.

In fact, when $i = a_{1}$, we have $\s{seff}_{i}^{A} = \mbeta_{a_{1}}\mS_{0}^{A}$, while $\s{seff}_{i}^{A+} = \mS_{i}^{A+} = \frac{\mbeta_{a_1}}{\beta_{a_1}+1}(\mS_{0}^{A}+\mS_{a_{1}}^{A})$. Since $\frac{\mS_{a_{1}}^{A}}{\mbeta_{a_{1}}} > \mS_{0}^{A}$, it comes out that $\s{seff}_{a_{1}}^{A+} > \s{seff}_{a_{1}}^{A}$.
When $i \in I^{A}$ and $i \ne a_{1}$, given that we have assumed $\mS_{i}^{A} > \mbeta_{i}\mS_{0}^{A}$, then $\s{seff}_{i}^{A} = \mbeta_{i}\mS_{0}^{A}$. Since
$$\mS_{0}^{A+} = \frac{1}{\mbeta_{a_1}+1}(\mS_{0}^{A}+\mS_{a_{1}}^{A}) > \frac{1}{\mbeta_{a_1}+1}(\mS_{0}^{A}+\mbeta_{a_{1}}\mS_{0}^{A}) = \mS_{0}^{A}, $$
 we have
 $$\s{seff}_{i}^{A+} = \min \{ \mbeta_{i}\mS_{0}^{A+},\mS_{i}^{A+} \} > \mbeta_{i}\mS_{0}^{A} = \s{seff}_{i}^{A}.$$
 When $i \in [n]$ and $i \notin I^{A}$, it is obvious that $\s{seff}_{i}^{A+} = \mS_{i}^{A+} = \mS_{i}^{A} = \s{seff}_{i}^{A}$. And it concludes that the utility of $\mS^{A+}$ is better than that of $\mS^{A}$.

\end{proof}

Now we are allowed to prove Theorem.~\ref{lemma:final}.
\begin{proof}
  We partite $\mathscr{D}_c^{M+1}=D\uplus D'$ into two domains where
  $$
  D= \left\{ \mS\in \mathscr{D}_c^{M+1} | \forall i\in[M].\ \mS_i \le \mbeta_i \mS_0 \right\}.
  $$
   Lemma.~\ref{lemma:A} tells that each strategy in $D'$ can be emulated by a strategy in $D$.
   Therefore, $D$ includes the optimal partition strategy within $\mathscr{D}_c^{M+1}$.
  By having $k=||\mB||$, $U(\mS,\mB) = -F(\mS) \times r\omega||\mB||$ in Lemma.~\ref{lemma:B} within $D$
  and so forth the optimal strategy among $D$ is our desired partition.
\end{proof}

Based on Theorem.~\ref{lemma:final}, we find that the optimization of strategy is indeed independent from the total resource $c$. Hence we conclude that for any participant, the optimal strategy is to divide half resource for the shared proof and divide the rest part according to the market weight of each chain
$$
\opt(\bm{\beta}) := \frac{1}{c} \widetilde{\opt}(c,\bm{B}) = \frac{1}{2} \circ \frac{1}{2} \bm{\beta}.
$$

\subsection{System Security}
In this part, we show the difficulty of devastating a chain (say, $i^\textnormal{th}$ chain of market weight $\bm{B}_i=b$) with $\frac{b}{||\bm{B}||}=\beta$ fraction of total
market weight.
To devastate chain $i$, the adversary should occupy the total market weight with total storage resource over $\alpha b$\footnote{Most existing systems ask for $\alpha>1/3$, but we treat $\alpha$ as a tunable parameter to allow flexibility.}.
That is to have $\bm{s}_i = \alpha (b+\bm{s}_i)$ and at the same time
$(1+\delta)\bm{\beta}'_i \bm{s}_0 \ge \bm{s}_i$
where
$\bm{\beta}'_i = \frac{b+\bm{s}_i}{||\bm{B}||+\bm{s}_i}$.
Thereby,
$$
\bm{s}_i = \frac{\alpha}{1-\alpha}b,
$$
and hence
\begin{align*}
\bm{s}_0 &\ge \frac{\bm{s}_i}{(1+\delta)\bm{\beta}'_i} =
\frac{\bm{s}_i (||\bm{B}||+\bm{s}_i)}{(1+\delta)(b+\bm{s}_i)}
\\
&\ge \frac{\frac{\alpha}{1-\alpha}b ||\bm{B}||}{(1+\delta)(b+\frac{1-\alpha}{\alpha}b)} = \frac{\alpha^2 }{(1+\delta)(1-\alpha)}||\bm{B}||.
\end{align*}
For $\alpha>\frac{1}{3}$ and $\delta=\frac{1}{10}$,
$\bm{s}_0\ge \frac{5}{33}||\bm{B}|| > 15\% ||\bm{B}||$.
This tells that regardless of the market weight of the chain, the adversary has to devote  more than $15\%$ global storage resource ($45\%$ to $\alpha>1/2$) to devastate a chain with even the slightest market weight.

\section{Conclusion}
In this paper, we have proposed a novel multi-chain scheme from the inherited merit of proof-of-space.
With our framework, the same storage source contributes simultaneously to multiple blockchains and
 newly set up blockchains are hard to be devastated.
In the future, we look forward to the flourishing development of PoC-based blockchains and having our framework implemented on them.
Also, although we have shown to realize this framework via pebbling graph-styled proof-of-space schemes,
we also expect its application on PoC schemes of other styles.

\bibliographystyle{unsrt}
\bibliography{dblp0}

\end{document}